\documentclass[a4paper]{article}

\usepackage{amssymb, amsmath, graphicx, hyperref,a4wide, fullpage}
\usepackage[utf8]{inputenc}
\usepackage[noadjust]{cite}
\usepackage{todonotes}
\usepackage{enumitem}

\newcommand{\aparagraph}[1]{\paragraph{#1}}

 \newtheorem{lemma}{Lemma}
 \newtheorem{theorem}{Theorem}

\newcommand{\qed}{\hfill\ensuremath{\Box}\medskip\\\noindent}
\newenvironment{proof}{\noindent\emph{Proof. }}

\newcommand{\level}{\ensuremath\mathrm{level}}

\newcommand{\access}{\ensuremath\mathsf{access}}
\newcommand{\setfinger}{\ensuremath\mathsf{setfinger}}
\newcommand{\movefinger}{\ensuremath\mathsf{movefinger}}
\newcommand{\lce}{\ensuremath\mathsf{lce}}

\newcommand{\str}{\ensuremath{S} }

\newcommand{\slp}{\ensuremath{\mathcal{S}} }

\newcommand{\nca}{\ensuremath{\textsf{nca}}}

\newcommand{\leftc}{\ensuremath{\text{\textit{left}}}}
\newcommand{\rightc}{\ensuremath{\text{\textit{right}}}}

\title{Finger Search in Grammar-Compressed Strings}
\author{Philip Bille \\ \texttt{phbi@dtu.dk} \and Anders Roy Christiansen \\ \texttt{aroy@dtu.dk} \and Patrick Hagge Cording \\ \texttt{phaco@dtu.dk} \and Inge Li Gørtz \\ \texttt{inge@dtu.dk}} 
\date{}
\begin{document}
\maketitle


\begin{abstract}
	Grammar-based compression, where one replaces a long string by a small context-free grammar that generates the string, is a simple and powerful paradigm that captures many popular compression schemes. Given a grammar, the random access problem is to compactly represent the grammar while supporting random access, that is, given a position  in the original uncompressed string report the character at that position. In this paper we study the random access problem with the finger search property, that is, the time for a random access query should depend on the distance between a specified index $f$, called the \emph{finger}, and the query index $i$. We consider both a static variant, where we first place a finger and subsequently access indices near the finger efficiently, and a dynamic variant where also moving the finger such that the time depends on the distance moved is supported. 
	
	Let $n$ be the size the grammar, and let $N$ be the size of the string. For the static variant we give a linear space representation that supports placing the finger in $O(\log N)$ time and subsequently accessing in $O(\log D)$ time, where $D$ is the distance between the finger and the accessed index. For the dynamic variant we give a linear space representation that supports placing the finger in $O(\log N)$ time and accessing and moving the finger in $O(\log D + \log \log N)$ time. Compared to the best linear space solution to random access, we improve a $O(\log N)$ query bound to $O(\log D)$ for the static variant and to $O(\log D + \log \log N)$ for the dynamic variant, while maintaining linear space. As an application of our results we obtain an improved solution to the longest common extension problem in grammar compressed strings. To obtain our results, we introduce several new techniques of independent interest, including a novel van Emde Boas style decomposition of grammars.
\end{abstract}

\section{Introduction}

Grammar-based compression, where one replaces a long string by a small context-free grammar that generates the string, is a simple and powerful paradigm that captures many popular compression schemes including the Lempel-Ziv family~\cite{lz78, lz77, Welch1984}, Sequitur~\cite{nevill1997identifying}, Run-Length Encoding, Re-Pair~\cite{larsson2000off}, and many more~\cite{Shibata-et-al-1999,Gage1994,
	KiefferYang2000,Kiefferetal2000,YangKieffer2000,Apostolico2000off,ApostolicoLonardi1998,ApostolicoLonardi2000a,goto2015lzd}. All of these are or can be transformed into equivalent grammar-based
compression schemes with little expansion~\cite{Rytter2003,Charikaretal2005}. 

Given a grammar $\mathcal{S}$ representing a string $S$, the \emph{random access problem} is to compactly represent $\mathcal{S}$ while supporting fast $\access$ queries, that is, given an index $i$ in $S$ to report $S[i]$. The random access problem is one of the most basic primitives for computation on grammar compressed strings, and solutions to the problem are a key component in a wide range of  algorithms and data structures for grammar compressed strings  \cite{bille2013fingerprints, BLRSSW2014, gagie2012faster, gagie2014lz77, DBLP:journals/jda/GagieGP15, Bille2015, tomohiro2015detecting,tanaka2013computing,tomohiro2015compressed,BCPT2015}. 

In this paper we study the random access problem with the \emph{finger search property}, that is, the time for a random access query should depend on the distance between a specified index $f$, called the \emph{finger}, and the query index $i$. We consider two variants of the problem. The first variant is \emph{static finger search}, where we can place a finger with a $\setfinger$ operation and subsequently access positions near the finger efficiently. The finger can only be moved by a new $\setfinger$ operation, and the time for $\setfinger$ is independent of the distance to the previous position of the finger. The second variant is \emph{dynamic finger search}, where we also support a $\movefinger$ operation that updates the finger such that the update time depends on the distance the finger is moved.

Our main result is efficient solutions to both finger search problems. To state the bounds, let $n$ be the size the grammar $\mathcal{S}$, and let $N$ be the size of the string $S$. For the static finger search problem, we give an $O(n)$ space representation that supports $\setfinger$ in $O(\log N)$ time and $\access$ in $O(\log D)$ time, where $D$ is the distance between the finger and the accessed index. For the dynamic finger search problem, we give an $O(n)$ space representation that supports $\setfinger$ in $O(\log N)$ time and $\movefinger$ and $\access$ in $O(\log D + \log \log N)$ time. The best linear space solution for the random access problem uses $O(\log N)$ time for $\access$. Hence, compared to our result we improve the $O(\log N)$ bound to $O(\log D)$ for the static version and to $O(\log D + \log \log N)$ for the dynamic version, while maintaining linear space. These are the first non-trivial bounds for the finger search problems. 

As an application of our results we also give a new solution to the \emph{longest common extension problem} on grammar compressed strings~\cite{bille2013fingerprints, tomohiro2015detecting, NishimotoIIBT15}. Here, the goal is to compactly represent $\mathcal{S}$ while supporting fast $\lce$ queries, that is, given a pair of indices $i,j$ to compute the length of the longest common prefix of $S[i,N]$ and $S[j,N]$. We give an $O(n)$ space representation that answers queries in $O(\log N + \log^2 \ell)$, where $\ell$ is the length of the longest common prefix. The best $O(n)$ space solution for this problem uses $O(\log N\log \ell)$ time, and hence our new bound is always at least as good and better whenever $\ell = o(N^\varepsilon)$. 

\subsection{Related Work}
We briefly review the related work on the random access problem and finger search. 

\aparagraph{Random Access in Grammar Compressed Strings} 
First note that naively we can store $S$ explicitly using $O(N)$ space and report any character in constant time. Alternatively, we can compute and store the sizes of the strings derived by each grammar symbol in $\mathcal{S}$ and use this to simulate a top-down search on the grammars derivation tree in constant time per node. This leads to an $O(n)$ space representation using $O(h)$ time, where $h$ is the height of the grammar~\cite{gasieniec2005real}. Improved succinct space representation of this solution are also known~\cite{claude2011self}. Bille et al.~\cite{BLRSSW2014} gave a solution using $O(n)$ and $O(\log N)$ time, thus achieving a query time independent of the height of the grammar. Verbin and Yu~\cite{verbin2013data} gave a near matching lower bound by showing that any solution using $O(n \log^{O(1)} N)$ space must use $\Omega(\log^{1- \epsilon} N)$ time. Hence, we cannot hope to obtain significantly faster query times within $O(n)$ space. Finally, Belazzougui et al.~\cite{BCPT2015} very recently showed that with superlinear space slightly faster query times are possible. Specifically, they gave a solution using $O(n\tau\log_\tau N/n)$ space and $O(\log_\tau N)$ time, where $\tau$ is a trade-off parameter. For $\tau = \log^\epsilon N$ this is $O(n \log^\epsilon N)$ space and $O(\log N/\log \log N)$ time. Practical solutions to this problem have been considered in \cite{BelazzouguiBG, navarro2014grammar,gagie2014block}.

The above solutions all generalize to support decompression of an arbitrary substring of length $D$ in time $O(t_{\access} + D)$, where $t_{\access}$ is the time for $\access$ (and even faster for small alphabets~\cite{BCPT2015}). We can extend this to a simple solution to finger search (static and dynamic). The key idea is to implement $\setfinger$ as a random access and $\access$ and $\movefinger$ by decompressing or traversing, respectively, the part of the grammar in-between the two positions. This leads to a solution that uses $O(t_{\access})$ time for $\setfinger$ and $O(D)$ time for $\access$ and $\movefinger$. 

Another closely related problem is the \emph{bookmarking problem}, where a set of positions, called \emph{bookmarks}, are given at preprocessing time and the goal is to support fast substring decompression from any bookmark in constant or near-constant time per decompressed character~\cite{gagie2012faster,cordingbookmarks}. In other words, bookmarking allows us to decompress a substring of length $D$ in time $O(D)$ if the substring crosses a bookmark. Hence, with bookmarking we can improve the $O(t_{\access} + D)$ time solution for substring decompression to $O(D)$ whenever we know the positions of the substrings we want to decompress at preprocessing time. A key component in the current solutions to bookmarking is to trade-off the $\Omega(D)$ time we need to pay to decompress and output the substring. Our goal is to support access without decompressing in $o(D)$ time and hence this idea does not immediately apply to finger search.  

\aparagraph{Finger Search}
Finger search is a classic and well-studied concept in data structures, see e.g.,  \cite{Bentley1976,BMW2003,BLMTT2003,SeidelA96,DR1994,Fleischer1996,GuibasMPR77,Mehlhorn1981,Kosaraju1981,Pugh1990,Sleator1985} and the survey~\cite{Brodal04}. In this setting, the goal is to maintain a dynamic dictionary data structure such that searches have the finger search property. Classic textbook examples of efficient finger search dictionaries include splay trees, skip lists, and level linked trees. Given a comparison based dictionary with $n$ elements, we can support optimal searching in $O(\log n)$ time and finger searching in $O(\log d)$ time, where $d$ is the rank distance between the finger and the query~\cite{Brodal04}. Note the similarity to our compressed results that reduce an $O(\log N)$ bound to $O(\log D)$.  

\subsection{Our results} 
We now formally state our results. Let $S$ be a string of length $N$ compressed into a grammar $\mathcal{S}$ of length $n$. Our goal is to support the following operations on $\mathcal{S}$. 
\begin{description}
	\item[$\quad\access(i)$:] return the character $S[i]$
	\item[$\quad\setfinger(f)$:] set the finger at position $f$ in $S$.  
	\item[$\quad\movefinger(f)$:] move the finger to position $f$ in $S$.
\end{description}
The static finger problem is to support $\access$ and $\setfinger$, and the dynamic finger search problem is to support all three operations. We obtain the following bounds for the finger search problems. 
\begin{theorem}\label{thm:finger}
	Let $\slp$ be a grammar of size $n$ representing a string $S$ of length $N$. Let $f$ be the current position of the finger, and let $D = |f - i|$ for some $i$. Using $O(n)$ space we can support either:
	\begin{enumerate}[label=(\roman*)]
		\item $\setfinger(f)$ in $O(\log N)$ time and $\access(i)$ in $O(\log D)$ time.
		\item $\setfinger(f)$ in $O(\log N)$ time, $\movefinger(i)$ and $\access(i)$ both in $O(\log D + \log\log N)$ time.
	\end{enumerate}	
\end{theorem}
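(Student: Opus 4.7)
The plan is to build on the heavy-path decomposition of the grammar DAG from Bille et al., which supports $O(\log N)$ random access from the root in $O(n)$ space. After $\setfinger(f)$ I would perform a standard root-to-$f$ access and \emph{store the trace}: the sequence of nodes visited together with, for each node $v$, the string interval $[l_v, r_v]$ that its subtree derives. The setup then costs $O(\log N)$ time and the trace has length $O(\log N)$. With this in hand, $\access(i)$ reduces to (a) finding the nearest ancestor $v^\ast$ of $f$ on the trace whose interval contains $i$, and (b) descending from $v^\ast$ to the leaf for position $i$.

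For step (a) I would use exponential search on the stored trace: probe heights $1,2,4,\dots$ above $f$ until the interval covers $i$, then bisect the final doubling. For step (b) I would reuse the top-down access of Bille et al.\ restricted to the subtree of $v^\ast$. The \emph{main obstacle} is that with the plain heavy-path trace neither cost is $O(\log D)$: the finger $f$ may lie arbitrarily close to the boundary of its subtree, so $v^\ast$ can be many heavy-light levels above $f$ even when $D$ is tiny, and the subtree of $v^\ast$ can be much larger than $D$, yielding a descent cost of $\Theta(\log N)$.

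To fix both issues I would introduce the van Emde Boas style decomposition promised in the abstract: enrich the trace so that along the ancestor chain of $f$ the footprints on each side of $f$ grow geometrically. Concretely, between consecutive heavy-light ancestors I would insert \emph{virtual} ancestors representing intervals of lengths $2^0,2^1,2^2,\dots$ centred at $f$, each realised by a constant number of precomputed grammar handles supplied by a size-based decomposition of every nonterminal (computable in $O(n)$ total space once, since the DAG has only $n$ distinct nonterminals). With this enrichment the ancestor at level $j$ has footprint $\Theta(2^j)$ on each side of $f$, so the NCA sits at level $\lceil\log D\rceil+O(1)$; exponential search locates it in $O(\log D)$ time, and the descent into a region of size $O(D)$ likewise costs $O(\log D)$. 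This settles part~(i).

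For part~(ii), $\movefinger(i)$ first runs $\access(i)$ in $O(\log D)$ time to locate $i$ inside the current $v^\ast$, and must then re-splice the root-to-$i$ section of the trace from $v^\ast$ downward. The work below $v^\ast$ is bounded by the descent and the local vEB enrichment, both $O(\log D)$. The work above $v^\ast$ only changes $O(1)$ pointers but requires locating the correct slot in a dynamic structure indexed by subtree size, which I would implement with a $y$-fast trie on the $O(\log N)$ subtree sizes along the trace; this is precisely where the additive $O(\log\log N)$ term enters. A final bookkeeping step verifies that all auxiliary data---per-nonterminal size decomposition, sparse ancestor index, and the trie on the trace---fit in $O(n)$ total space, because each is linear in either $n$ or the current trace length.
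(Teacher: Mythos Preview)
Your diagnosis of the obstacle is right, but the proposed cure---``virtual ancestors'' realised by ``precomputed grammar handles''---has a real gap. Take $f$ to be the last position of $S(\leftc(r))$ for $r$ the root, and $i=f{+}1$. The only common ancestor of $v_f$ and $v_i$ in the parse tree is $r$; your level-$1$ virtual ancestor (footprint $\Theta(1)$ around $f$) corresponds to no node of the SLP, and no $O(1)$-size precomputed handle can stand for the interval $[f{-}1,f{+}1]$, because that interval straddles the root's two children and its location depends on the run-time value $f$, not on any fixed nonterminal. More generally, anything you precompute per nonterminal in $O(n)$ total space is anchored at that nonterminal's own fringe, not at an arbitrary interior index; a ``size-based decomposition of every nonterminal'' therefore cannot supply $f$-centred windows. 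So there is nothing from which to ``descend in $O(\log D)$'': the only place to re-enter the grammar is the true NCA $u$, and a plain random access inside $\rightc(u)$ costs $\Theta(\log N_{\rightc(u)})$, which can be $\Theta(\log N)$ even when $D=1$.

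The paper does not try to shrink the NCA. It accepts that $u$ may be high and instead makes the descent cheap with a separate \emph{fringe access} structure (Lemma~\ref{lem:distaccess}): an $O(n)$-space data structure that, for any node $w$, returns $S(w)[j]$ in $O(\log j+\log\log N)$ time. This is where the vEB idea actually lives---as a global, precomputed partition of the DAG by $\level(v)=\lfloor\log\log N-\log\log N_v\rfloor$ into nested top/bottom pieces, with ``leftmost top paths'' and a reduction of the per-path predecessor queries to weighted-ancestor queries on a single forest so that space stays linear---not as a per-finger enrichment of the trace. The point is that once $u$ is known, the target lies at distance at most $D$ from the \emph{left fringe} of $S(\rightc(u))$, so fringe access finishes in $O(\log D+\log\log N)$. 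Finding $u$ is done by binary search over the $O(\log N)$ stored right-distances $r_j=N_{v(h_j)}-l_j$ along the heavy-path decomposition of the finger, followed by one heavy-path exit query; both steps cost $O(\log\log N)$. The residual $\log\log N$ is removed in part~(i) by caching the substring $S[f{+}1,\ldots,f{+}\log N]$ at $\setfinger$ time.

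For part~(ii) your sketch also misses the point the paper labours over. In the example above the NCA is the root, so the \emph{entire} finger representation below it must be rebuilt after $\movefinger(f{+}1)$; with a heavy-path representation that is $\Theta(\log N)$ fresh heavy paths even though $D=1$. A $y$-fast trie on the sizes along the old trace does not help here, because the new root-to-$i$ path is not a small edit of the old one. The paper's fix is to change the finger representation itself to a mixed ``left heavy path decomposition'' (maximal heavy subpaths interleaved with leftmost-top-path subpaths, kept on stacks), engineered so that the portion below $u$ can always be regenerated by one heavy-path exit step plus one fringe-access descent, i.e.\ in $O(\log D+\log\log N)$ total.
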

Compared to the previous best linear space solution, we improve the $O(\log N)$ bound to $O(\log D)$ for the static variant and to $O(\log D + \log \log N)$ for the dynamic variant, while maintaining linear space. These are the first non-trivial solutions to the finger search problems. Moreover, the logarithmic bound in terms of $D$ may be viewed as a natural grammar compressed analogue of the classic uncompressed finger search solutions. We note that Theorem~\ref{thm:finger} is straightforward to generalize to multiple fingers. Each additional finger can be set in $O(\log N)$ time, uses $O(\log N)$ additional space, and given any finger $f$, we can support $\access(i)$ in $O(\log D_f)$ time, where $D_f = |f - i|$.

\subsection{Technical Overview}
To obtain Theorem~\ref{thm:finger} we introduce several new techniques of independent interest. First, we consider a variant of the random access problem, which we call the \emph{fringe access problem}. Here, the goal is to support fast access close to the beginning or end (the fringe) of a substring derived by a grammar symbol. We present an $O(n)$ space representation that supports fringe access from any grammar symbol $v$ in time $O(\log D_v + \log \log N)$, where $D_v$ is the distance from the fringe in the string $S(v)$ derived by $v$ to the queried position. The key challenge is designing a data structure for efficient navigation in unbalanced grammars. 

The main component in our solution to this problem is a new recursive decomposition. The decomposition resembles the classic van Emde Boas data structure~\cite{van1976design}, in the sense that we recursively partition the grammar into a hierarchy of depth $O(\log \log N)$ consisting of subgrammars generating strings of lengths $N^{1/2}, N^{1/4}, N^{1/8}, \ldots$. We then show how to implement fringe access via predecessor queries on special paths produced by the decomposition. We cannot afford to explicitly store a predecessor data structure for each special path, however, using a technique due to Bille et al.~\cite{BLRSSW2014}, we can represent all the special paths compactly in a tree and instead implement the predecessor queries as weighted ancestor queries on the tree. This leads to an $O(n)$ space solution with $O(\log D_v + (\log \log N)^2)$ query time. Whenever $D_v \geq 2^{(\log \log N)^2}$ this matches our desired bound of $O(\log D_v + \log \log N)$. To handle the case when $D_v \leq 2^{(\log \log N)^2}$ we use an additional decomposition of the grammar and further reduce the problem to weighted ancestor queries on trees of small weighted height. Finally, we give an efficient solution to weighted ancestor for this specialized case that leads to our final result for fringe access. 

Next, we use our fringe access result to obtain our solution to the static finger search problem. The key idea is to decompose the grammar into heavy paths as done by Bille et al.~\cite{BLRSSW2014}, which has the property that any root-to-leaf path in the directed acyclic graph representing the grammar consists of at most $O(\log N)$ heavy paths. We then use this to compactly represent the finger as a sequence of the heavy paths. To implement $\access$, we binary search the heavy paths in the finger to find an exit point on the finger, which we then use to find an appropriate node to apply our solution to fringe access on. Together with a few additional tricks this gives us Theorem~\ref{thm:finger}(i).

Unfortunately, the above  approach for the static finger search problem does not extend to the dynamic setting. The key issue is that even a tiny local change in the position of the finger can change $\Theta(\log N)$ heavy paths in the representation of the finger, hence requiring at least $\Omega(\log N)$ work to implement $\movefinger$. To avoid this we give a new compact representation of the finger based on both heavy path and the special paths obtained from our van Emde Boas decomposition used in our fringe access data structure. We show how to efficiently maintain this representation during local changes of the finger, ultimately leading to Theorem~\ref{thm:finger}(ii).

\subsection{Longest Common Extensions} 
As application of Theorem~\ref{thm:finger}, we give an improved solution to longest common extension problem in grammar compressed strings. The first solution to this problem is due to Bille et al.~\cite{bille2013fingerprints}. They showed how to extend random access queries to compute Karp-Rabin fingerprints. Combined with an exponential search this leads to a linear space solution to the longest common extension problem using $O(\log N\log \ell)$ time, where $\ell$ is the length of the longest common extension. We note that we can plug in any of the above mentioned random access solution. More recently, Nishimoto~et~al.~\cite{NishimotoIIBT15} used a completely different approach to get $O(\log N + \log \ell\log^* N)$ query time while using superlinear $O(n\log N\log^* N)$ space. We obtain:

\begin{theorem}\label{thm:LCEres}
	Let $\mathcal{S}$ be a grammar of size $n$ representing a string $S$ of length $N$. We can solve the longest common extension problem in $O(\log N + \log^2 \ell)$ time and $O(n)$ space where $\ell$ is the length of the longest common extension.
\end{theorem}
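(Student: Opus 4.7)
The plan is to combine the multi-finger version of Theorem~\ref{thm:finger}(i) with Karp--Rabin fingerprints in a standard exponential-then-binary search, paying $\log D$ rather than $\log N$ per comparison because all queries live close to a finger. First I would upgrade the static finger search data structure so that, in addition to $\access$, it supports a fingerprint primitive: given a finger at $f$ and a target $i$, return the Karp--Rabin fingerprint of the substring $S[\min(f,i),\max(f,i)]$ in $O(\log D)$ time, where $D=|f-i|$. This follows the fingerprint-from-access upgrade of Bille et al.~\cite{bille2013fingerprints} applied to our finger-search structure: precompute $|S(v)|$ and the fingerprint of $S(v)$ for every grammar symbol $v$ (total $O(n)$ additional space), and observe that the traversal underlying fringe access visits exactly the children whose fingerprints must be concatenated, in order, to form the answer, so composing one fingerprint per visited node adds only $O(1)$ time per step.

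Given this primitive, an $\lceq(i,j)$ query proceeds as follows. Place two fingers, at $i$ and at $j$, in total $O(\log N)$ time, using the multi-finger remark following Theorem~\ref{thm:finger}. Exponentially search for the LCE length $\ell$: for $k=0,1,2,\ldots$, compute and compare the fingerprints of $S[i,i+2^k-1]$ and $S[j,j+2^k-1]$; both endpoints lie within distance $2^k$ of their respective fingers, so round $k$ costs $O(k)$, and we stop at the first $K$ with $2^{K-1}\le\ell<2^K$, spending $\sum_{k=0}^{K} O(k)=O(\log^2\ell)$ altogether. Then binary search inside $[2^{K-1},2^K]$ for the exact $\ell$; all $O(\log\ell)$ probes stay within distance $O(\ell)$ of the fingers, so each fingerprint costs $O(\log\ell)$, contributing another $O(\log^2\ell)$. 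Summing yields $O(\log N+\log^2\ell)$ time and $O(n)$ space. Since Karp--Rabin fingerprints are randomized, the candidate $\ell$ returned by the binary search is then verified by two $\access$ queries checking that $S[i+\ell]\neq S[j+\ell]$ (each $O(\log\ell)$), and choosing a sufficiently large prime keeps the total false-positive probability polynomially small.

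The main obstacle will be the first step: showing that the van Emde Boas style decomposition and heavy-path machinery behind Theorem~\ref{thm:finger}(i) really do compose fingerprints along the traversal at no asymptotic overhead. The access routine descends through heavy and special paths partly by predecessor and weighted-ancestor queries that return only a node, not the fingerprint of the prefix of the path consumed to reach it. To preserve the $O(\log D)$ budget, each such path must be augmented with a prefix-fingerprint array indexed by the same weights used for access, so that one lookup returns both the target node and the fingerprint of the consumed prefix; with this augmentation the per-level cost stays $O(1)$ and Theorem~\ref{thm:LCEres} follows.
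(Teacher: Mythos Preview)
Your proposal is correct and takes essentially the same route as the paper: augment Theorem~\ref{thm:finger}(i) to return Karp--Rabin fingerprints in $O(\log D)$ time (the paper stores, for each node $v$, both $\phi(S(v))$ and the combined fingerprint $p_v$ of all left-hanging subtrees along $v$'s leftmost top path, which is precisely the prefix-fingerprint augmentation you anticipate), set two fingers, and run an exponential search whose $k$th step costs $O(k)$. The only notable difference is the correctness guarantee: the paper verifies the fingerprint function once during an $O(N\log N)$ expected-time preprocessing phase so that the stated query bound is worst-case, whereas your final check that $S[i+\ell]\neq S[j+\ell]$ does not certify $S[i,i+\ell-1]=S[j,j+\ell-1]$, so your variant remains Monte Carlo.
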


Note that we need to verify the Karp-Rabin fingerprints during preprocessing in order to obtain a worst-case query time. Using the result from Bille et al.~\cite{BLRSSW2014} this gives a randomized expected preprocessing time of $O(N\log N)$. 

Theorem~\ref{thm:LCEres} improves the $O(\log N\log \ell)$ solution to $O(\log N + \log^2 \ell)$. The new bound is always at least as good and asymptotically better whenever $\ell = o(N^\epsilon)$ where $\epsilon$ is a constant. The new result follows by extending Theorem~\ref{thm:finger} to compute Karp-Rabin fingerprints and use these to perform the exponential search from~\cite{bille2013fingerprints}. 


\section{Preliminaries}

\aparagraph{Strings and Trees}
Let $S = S[1, |S|]$ be a string of length $|S|$. Denote by $S[i]$ the character in $S$ at index $i$ and let $S[i, j]$ be the substring of $S$ of length $j - i+1$ from index $i \geq 1$ to $|S| \geq j \geq i$, both indices included.

Given a rooted tree $T$, we denote by $T(v)$ the subtree rooted in a node $v$ and the left and right child of a node $v$ by $\leftc(v)$ and $\rightc(v)$ if the tree is binary. The \textit{nearest common ancestor} $\nca(v,u)$ of two nodes $v$ and $u$ is the deepest node that is an ancestor of both $v$ and $u$. A weighted tree has weights on its edges. A \textit{weighted ancestor} query for node $v$ and weight $d$ returns the highest node $w$ such that the sum of weights on the path from the root to $w$ is at least $d$.

\aparagraph{Grammars and Straight Line Programs}
Grammar-based compression replaces a long string by a small context-free grammar (CFG). We assume without loss of generality that the grammars are in fact \emph{straight-line programs} (SLPs). The lefthand side of a grammar rule in an SLP has exactly one variable, and the forighthand side has either exactly two variables or one terminal symbol. In addition, SLPs are unambigous and acyclic. We view SLPs as a directed acyclic graph (DAG) where each rule correspond to a node with outgoing ordered edges to its variables. Let $\mathcal{S}$ be an SLP. As with trees, we denote the left and right child of an internal node $v$ by $\leftc(v)$ and $\rightc(v)$. The unique string $S(v)$ of length $N_v$ is produced by a depth-first left-to-right traversal of $v$ in $\mathcal{S}$ and consist of the characters on the leafs in the order they are visited. The corresponding parse tree for $v$ is denoted $T(v)$. We will use the following results, that provides efficient random access from any node $v$ in $\mathcal{S}$.
\begin{lemma}[\cite{BLRSSW2014}]\label{lem:randomaccess}
	Let $\str$ be a string of length $N$ compressed into a SLP $\slp$ of size $n$. Given a node $v \in \slp$, we can support random access in $S(v)$ in $O(\log N_v)$ time, and at the same time reporting the sequence of heavy paths and their entry- and exit points in the corresponding depth-first traversal of $\slp(v)$. The number of heavy paths visited is $O(\log N_v)$.
\end{lemma}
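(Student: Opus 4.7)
The plan is to use the classical heavy-path decomposition of the SLP DAG, combined with per-path predecessor search, to reduce $\access$ to a short top-down descent. In a preprocessing pass, compute $N_u$ for every node $u \in \slp$ bottom up, and for each internal node mark the child with the larger $N$-value as the \emph{heavy} child, breaking ties arbitrarily. A maximal chain of heavy-child edges is a \emph{heavy path}. Because descending to a light child at least halves the length of the derived string, any root-to-leaf path in the parse tree $T(v)$ crosses at most $\lceil \log N_v \rceil + 1$ heavy paths, which is the source of the ``$O(\log N_v)$ heavy paths visited'' claim.

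For each heavy path $\pi = u_1, \ldots, u_k$ (with $u_{j+1} = \text{heavy}(u_j)$), store the sorted sequence of prefix sums of the lengths $N_{\text{light}(u_j)}$ together with pointers to the corresponding light children. Given an offset $i \le N_{u_1}$, a predecessor query on these prefix sums returns in one shot the unique index $j^\star$ at which the descent leaves $\pi$ into $\text{light}(u_{j^\star})$, together with the residual offset into that light child. The query for $(v,i)$ then descends through heavy paths $\pi_1, \pi_2, \ldots$: on $\pi_t$ we perform one predecessor query, emit the $(\pi_t,\ \text{entry},\ \text{exit})$ triple, jump to the light child, and repeat, terminating at a leaf carrying the answer. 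The reporting part of the lemma is automatic because the entry/exit points are exactly the inputs and outputs of these predecessor searches. The space is $O(n)$ since the prefix-sum tables have total size proportional to the number of light children, which is $n-1$.

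The main obstacle is making the total query time $O(\log N_v)$ instead of $O(\log^2 N_v)$. A naive implementation pays $O(\log k_t)$ per visited heavy path, and even using $k_t = O(N_v / 2^{t-1})$ the sum only telescopes to $O(\log^2 N_v)$. To shave the extra logarithm, the natural route is to link the per-path predecessor structures in a fractional-cascading-style bridge: each heavy path is stored as a balanced BST on its prefix sums, and adjacent trees along any possible descent are connected by bridge pointers so that after an initial $O(\log N_v)$ search on $\pi_1$ every subsequent query costs only $O(1)$. An equivalent route is to concatenate, for each node $v$, the weighted heavy-path structure below $v$ into a single weighted search tree of weighted depth $\Theta(N_v)$ and resolve the whole descent by one weighted-ancestor query of cost $O(\log N_v)$. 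Either implementation yields the advertised bound of $O(\log N_v)$, and establishes the lemma.
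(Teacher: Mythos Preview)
The paper does not prove this lemma; it is imported directly from~\cite{BLRSSW2014}, so the comparison is against the original construction there. Your high-level outline---heavy-path decomposition with the heavy child chosen as the one maximising $N$, the $O(\log N_v)$ bound on the number of light edges along any root-to-leaf path, and per-path predecessor search over prefix sums of the light-child sizes---is exactly the skeleton of the Bille et al.\ construction, and the space analysis is fine.

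The gap is in your reduction from $O(\log^2 N_v)$ to $O(\log N_v)$. Neither of the two routes you sketch actually works as stated. For fractional cascading, the catalog graph here is the tree of heavy paths, and a single heavy path may have $\Theta(n)$ light children hanging off it; Chazelle--Guibas fractional cascading only gives $O(1)$ per step when the local degree is bounded, so you do not get constant-time hops in general. For your second route, building ``for each node $v$'' a single weighted search tree over the whole structure below $v$ is not an $O(n)$-space construction, and you cannot share these trees across different $v$ in any obvious way.

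What \cite{BLRSSW2014} actually does is store, for each heavy path, an \emph{interval biased search tree} over the prefix sums, where the interval ending at the $j$th light child is given weight $N_{\text{light}(u_j)}$. A search on a heavy path entered at a node of size $N$ and exited into a light child of size $N'$ then costs $O(1+\log(N/N'))$, and summing along the descent telescopes to $O(\log N_v)$. The total size of all these biased trees is $O(n)$ because the heavy paths partition the nodes. (An additional subtlety, handled via the heavy-path suffix forest, is that in a DAG a heavy path can be entered below its apex; the biased-tree argument still telescopes because the bound is in terms of the entry node's $N$-value, not the apex.) If you replace your last paragraph with this telescoping argument, the proof goes through.
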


\paragraph{Karp-Rabin Fingerprints}

For a prime $p$, $2n^{c+4} < p \leq 4n^{c+4}$ and $x \in [p]$ the Karp-Rabin fingerprint~\cite{rabinkarp},
denoted $\phi(S[i,j])$, of the substring $S[i, j]$ is defined as $\phi(S[i,j]) = \sum_{i \le k \leq j} S[k]x^{k-i}  \bmod p$.
The key property is that for a random choice of $x$, two substrings of $S$ match iff their fingerprints match (whp.), thus allowing us to compare substrings in constant time. We use the following well-known properties of fingerprints.
\begin{lemma}\label{lem:fp}
	The Karp-Rabin fingerprints have the following properties:
	
	\begin{itemize}
		\item[1)] Given $\phi(S[i,j])$, the fingerprint $\phi(S[i,j \pm a])$ for some integer $a$, can be computed in $O(a)$ time.
		\item[2)] Given fingerprints $\phi(S[1,i])$ and $\phi(S[1,j])$, the fingerprint $\phi(S[i,j])$ can be computed in $O(1)$ time.
		\item[3)] Given fingerprints $\phi(S_1)$ and $\phi(S_2)$,  the fingerprint $\phi(S_1 \cdot S_2) = \phi(S_1) \oplus\phi(S_2)$ can be computed in $O(1)$ time.
	\end{itemize}
	
\end{lemma}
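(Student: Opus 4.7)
The plan is to derive each of the three properties by direct algebraic manipulation of the defining sum
\[
\phi(S[i,j]) = \sum_{k=i}^{j} S[k] \cdot x^{k-i} \bmod p.
\]
Throughout, I would assume that the powers $x^0, x^1, x^2, \ldots$ and their modular inverses have been tabulated during preprocessing, so that any individual power can be retrieved in $O(1)$ time; this is the standard assumption underlying constant-time fingerprint arithmetic.

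For (1), I would first establish the one-step recurrences
\[
\phi(S[i, j+1]) = \phi(S[i,j]) + S[j+1] \cdot x^{j+1-i} \bmod p, \qquad \phi(S[i-1, j]) = S[i-1] + x \cdot \phi(S[i,j]) \bmod p,
\]
together with their analogues for shrinking the interval on either side. Iterating any such recurrence $a$ times then gives the claimed $O(a)$ bound.

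For (2), splitting the sum defining $\phi(S[1,j])$ at position $i$ yields
\[
\phi(S[1,j]) = \phi(S[1,i-1]) + x^{i-1} \cdot \phi(S[i,j]) \bmod p,
\]
so isolating $\phi(S[i,j])$ and multiplying by the precomputed value $x^{-(i-1)}$ produces a constant-time formula. For (3), concatenating $S_2$ onto $S_1$ shifts the exponents of the $S_2$ contribution by $|S_1|$, giving
\[
\phi(S_1 \cdot S_2) = \phi(S_1) + x^{|S_1|} \cdot \phi(S_2) \bmod p,
\]
which I would take as the definition of the operator $\oplus$; its evaluation is again $O(1)$.

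The only subtlety is the assumed $O(1)$-time access to arbitrary powers of $x$ and $x^{-1}$ modulo $p$. In the present setting this is straightforward to guarantee by tabulating the relevant powers during preprocessing: the fingerprints computed elsewhere in the paper are associated with the derived strings of grammar variables and with positions reached during the $O(\log N)$-deep traversals of Lemma~\ref{lem:randomaccess}, so only polynomially many lengths matter and a table of $O(n)$ entries suffices, preserving the linear space budget claimed in the main results.
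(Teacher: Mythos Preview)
Your derivation is correct and is the standard argument for these well-known identities. The paper itself does not prove this lemma at all: it is introduced with the phrase ``We use the following well-known properties of fingerprints'' and no proof is supplied, so there is nothing to compare against beyond noting that your argument is exactly the textbook one. One minor remark: your justification that only $O(n)$ powers of $x$ need to be stored is slightly informal; the usual way to make this precise in the SLP setting (as in~\cite{bille2013fingerprints}) is to store $x^{N_v}$ alongside each grammar node $v$, so that the shift factor in the $\oplus$ operation is always available in $O(1)$ time and space remains linear.
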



\section{Fringe Access}
In this section we consider the \emph{fringe access problem}. Here the goal is to compactly represent the SLP, such that for any node $v$, we can efficiently access locations in the string $S(v)$ close to the start or the end of the substring. The fringe access problem is the key component in our finger search data structures. A straightforward solution to the fringe access problem is to apply a solution to the random access problem. For instance if we apply the random access solution from Bille et al. \cite{BLRSSW2014} stated in Lemma \ref{lem:randomaccess} we immediately obtain a linear space solution with $O(\log N_v)$ access time, i.e., the access time is independent of the distance to the start or the end of the string. This is an immediate consequence of the central grammar decomposition technique of \cite{BLRSSW2014}, and does not extend to solve fringe access efficiently. Our main contribution in this section is a new approach that bypasses this obstacle. We show the following result. 

\begin{lemma}\label{lem:distaccess}
	Let $\mathcal{S}$ be an SLP of size $n$ representing a string of length $N$. Using $O(n)$ space, we can support access to position $i$ of any node $v$, in time $O(\log (\min (i, N_v - i)) + \log\log N)$.
\end{lemma}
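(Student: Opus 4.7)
The plan is to build a recursive decomposition of the SLP $\slp$ that mimics the van Emde Boas layout. At the top level I partition the parse DAG by marking every node $v$ whose derivation length $N_v$ lies in a range around $\sqrt{N}$, using these marked nodes to cut $\slp$ into an ``upper'' subgrammar whose terminals are the marked nodes and a collection of ``lower'' subgrammars rooted at the marked nodes. I then recurse inside each piece, producing at successive levels subgrammars that derive strings of length roughly $N^{1/4}$, $N^{1/8}$, and so on; after $O(\log\log N)$ levels the pieces derive strings of constant length. Because each node of $\slp$ is assigned to exactly one level (the one whose size range contains $N_v$), the total description is $O(n)$.

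Now, given a query $(v,i)$, assume by symmetry $i\le N_v-i$. I want to descend from $v$ toward the leftmost leaf of $T(v)$ until the current subtree becomes comparable to $i$, then finish with the random-access structure of Lemma~\ref{lem:randomaccess} at cost $O(\log i)$. Within one level of the decomposition, the descent along $\leftc$ and $\rightc$ edges traces a monotone path whose cumulative ``right-sibling sizes'' are increasing; the exit point on this path is the predecessor of $i$ among these cumulative weights. Hence fringe access reduces to one predecessor query per level on a special path. Storing a separate predecessor structure per path is too expensive, so I use the technique of Bille et al.~\cite{BLRSSW2014} to merge all special paths of a level into a single auxiliary tree, where the cumulative weights become edge weights, and answer each predecessor via a weighted ancestor query. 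A generic weighted ancestor structure costs $O(\log\log N)$, giving $O((\log\log N)^2)$ over all $O(\log\log N)$ levels on top of the $O(\log i)$ base cost. This already matches the claim when $i\ge 2^{(\log\log N)^2}$.

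For the remaining regime $i<2^{(\log\log N)^2}$ the bound $(\log\log N)^2$ is no longer negligible against $\log i$, so I need sharper weighted ancestor queries. I would augment the construction with an additional decomposition tailored to nodes whose near-fringe neighborhoods are short, reducing the small-$i$ case to weighted ancestor queries in trees whose weighted root-to-leaf depth is at most $2^{O((\log\log N)^2)}$. On such shallow weighted trees a micro/macro split, in which the top part of every weight prefix is indexed by a fusion-style table and the residual is resolved by a local table of size $\mathrm{polylog}\,N$, yields $O(\log i + \log\log N)$ per query. Combining the two regimes gives $O(\log(\min(i,N_v-i))+\log\log N)$ overall.

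The main obstacle I expect is the space accounting: naively unfolding $\slp$ into $O(\log\log N)$ recursive levels can inflate both the DAG and the auxiliary weighted-ancestor trees by a $\log\log N$ factor. Keeping the representation in $O(n)$ requires sharing nodes across levels (each DAG node is stored only at its assigned level) and ensuring that the collection of special paths built at each level has total length proportional to the number of level-nodes, in the spirit of the heavy-path amortization of Lemma~\ref{lem:randomaccess}. The second delicate point is the small-$i$ weighted ancestor substructure, which must simultaneously beat the generic $O(\log\log N)$-per-level cost of the upper regime and stay linear in the number of level-nodes so that summing over levels does not destroy the $O(n)$ space budget.
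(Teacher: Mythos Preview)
Your plan is essentially the paper's proof: the van Emde Boas decomposition by $N_v$-ranges, the leftmost-top-path traversal within each level, the reduction of the per-path predecessor problem to weighted ancestor in a merged forest (giving $O((\log\log N)^2+\log i)$ via a generic $O(\log\log N)$ weighted-ancestor structure), and the two-regime split at $i=2^{(\log\log N)^2}$ all match exactly. Your space analysis---each SLP node lives at a single level, and the merged path-forest has one node per SLP node---is also the paper's argument.

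The only divergence is how the small-$i$ regime is tightened. The paper does not use a fusion-style table. Instead it builds a \emph{second} copy of the entire structure whose first ART cut uses parameter $x=2^{(\log\log N)^2}$ rather than $\sqrt N$; this shrinks the recursion depth to $O(\log\log\log N)$. The leftmost-top-path forest is then split into level~$1$ (one $O(\log\log N)$ query, standard structure) and levels~$\ge 2$, where every tree has both height and weighted root-to-leaf depth at most $x$. On such trees the paper proves a dedicated sublemma: one more ART split with parameter $x$ yields a top tree with $O(n/x)$ leaves and height $\le x$, so storing an explicit predecessor array per top-tree root-to-leaf path costs $O(n)$ total and answers in $O(\log\log x)=O(\log\log\log N)$. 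Summed over $O(\log\log\log N)$ levels this is $O((\log\log\log N)^2)$, dominated by the single level-$1$ query. Your sketch correctly arrives at ``weighted ancestor in trees of weighted depth $2^{O((\log\log N)^2)}$'' but does not say what the additional decomposition is; without the reparametrised first cut you still have $\Theta(\log\log N)$ levels, so even a constant-time per-level structure would not beat $(\log\log N)$, and a micro/macro scheme alone would leave you at $(\log\log N)\cdot\Theta(1)$ rather than $\log\log N + (\log\log\log N)^2$.
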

The key idea in this result is a van Emde Boas style decomposition of $\mathcal{S}$ combined with a predecessor data structure on selected paths in the decomposition. To achieve linear space we reduce the predecessor queries on these paths to a weighted ancestor query. We first give a data structure with query time $O((\log\log N)^2 + \log (\min (i, N_v - i)))$. We then show how to reduce the query time to $O(\log\log N + \log (\min (i, N_v - i)))$ by reducing the query time for small $i$. To do so we introduce an additional decomposition and give a new data structure that supports fast weighted ancestor queries on trees of small weighted height.

For simplicity and without loss of generality we assume that the access point $i$ is closest to the start of $S(v)$, i.e., the goal is to obtain  $O(\log (i) + \log\log N)$ time. By symmetry we can obtain the corresponding result for access points close to the end of $S(v)$.

\subsection{van Emde Boas Decomposition for Grammars}
We first define the vEB decomposition on the parse tree $T$ and then extend it to the SLP $\mathcal{S}$. In the decomposition we use the ART decompostion by Alstrup et al.~\cite{alstrup1998marked}.

\aparagraph{ART Decomposition}
The ART decomposition introduced by Alstrup et al.~\cite{alstrup1998marked} decomposes a tree into a single top tree
and a number of bottom trees. Each bottom tree is a subtree rooted in a node of minimal depth such that the subtree contains no more than $x$ leaves and the top tree is all nodes not in a bottom tree.
The decomposition has the following key property.

\begin{lemma}[\cite{alstrup1998marked}]\label{lem:artdecomp}
	The ART decomposition with parameter $x$ for a rooted tree T with
	$N$ leaves produces a top tree with at most $\frac{N}
	{x+1}$ leaves.
\end{lemma}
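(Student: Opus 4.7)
The plan is to exhibit, for every leaf of the top tree, a disjoint block of at least $x+1$ leaves of $T$; summing over the set $L$ of top-tree leaves will then immediately yield $(x+1)|L|\le N$.

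I would first dispose of the trivial case $N\le x$: then the root itself has at most $x$ leaves below it, so the entire tree forms a single bottom tree, the top tree is empty, and the bound holds. Assume henceforth $N>x$. The next step is to show that every leaf $\ell$ of $T$ lies inside some bottom tree. This is because the singleton subtree $T(\ell)$ contains exactly one leaf and $1\le x$, so the set of ancestors of $\ell$ (including $\ell$ itself) whose subtree has at most $x$ leaves is non-empty; its shallowest element is by definition a bottom-tree root, and the corresponding bottom tree contains $\ell$. Consequently no leaf of the top tree is a leaf of $T$, and so every $v\in L$ is an internal node of $T$ all of whose $T$-children are roots of bottom trees.

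I would then extract the two key properties of a top-tree leaf $v$. First, $T(v)$ contains at least $x+1$ leaves of $T$: since $v$ belongs to the top tree, $v$ is not a bottom-tree root, so by the minimal-depth clause in the definition of the ART decomposition its subtree must contain strictly more than $x$ leaves. Second, the subtrees $\{T(v):v\in L\}$ are pairwise disjoint: if $v_1\in L$ were a proper ancestor of $v_2\in L$ in $T$, then $v_2$ would sit strictly below one of the bottom-tree-root children of $v_1$ and hence lie in a bottom tree, contradicting $v_2\in L$.

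These two facts combine in a single line: $(x+1)|L|\le\sum_{v\in L}\#\{T\text{-leaves in }T(v)\}\le N$, giving $|L|\le N/(x+1)$ as claimed. I do not foresee a serious obstacle; the only point requiring care is using the minimal-depth clause in the definition of a bottom tree twice — once to force every $T$-leaf into a bottom tree, and once to guarantee the $\ge x+1$ lower bound on $T$-leaves in $T(v)$ for each top-tree leaf $v$.
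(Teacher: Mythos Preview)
Your proof is correct. The paper itself does not prove this lemma at all: it is stated with a citation to Alstrup, Husfeldt, and Rauhe~\cite{alstrup1998marked} and used as a black box. Your argument---assigning to each top-tree leaf $v$ the at least $x+1$ leaves of $T$ lying in $T(v)$, and observing that these blocks are pairwise disjoint---is exactly the standard counting argument for this fact, and it goes through cleanly with the definition of the ART decomposition given here. The two places where you invoke the minimal-depth clause are precisely the right ones, and the disjointness step (a top-tree leaf cannot be a proper ancestor of another, because all its $T$-children are bottom-tree roots) is sound.
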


We are now ready to define the \emph{van Emde Boas (vEB) decomposition}.

\subparagraph{The van Emde Boas Decomposition} We define the van Emde Boas Decomposition of a tree $T$ as follows.
The \emph{van Emde Boas (vEB) decomposition} of $T$ is obtained by recursively applying an ART decomposition
: Let $v = root(T)$ and $x = \sqrt{N}$. If $N = O(1)$, stop. Otherwise, construct an ART decomposition of $T(v)$ with parameter $x$. For each bottom tree $T(u)$ recursively construct a vEB decomposition with $v = u$ and $x = \sqrt{x}$.

Define the \emph{level} of a node $v$ in $T$ as $\level(v) = \lfloor \log \log N - \log \log N_v \rfloor$ (this corresponds to the depth of the recursion when $v$ is included in its top tree). 

Note that except for the nodes on the lowest level---which are not in any top tree---all nodes belong to exactly one top tree.  For any node $v \in T$ not in the last level, let $T_{top}(v)$ be the top tree $v$ belongs to. The \emph{leftmost top path} of $v$ is the path from $v$ to the \emph{leftmost leaf} of $T_{top}(v)$. See Figure~\ref{fig:artdecomposition}.

Intuitively, the vEB decomposition of $T$ defines a nested hierarchy of subtrees that decrease by at least the square root of the size at each step. 

 \begin{figure}
 	\centering
 	\includegraphics[width=0.4\textwidth]{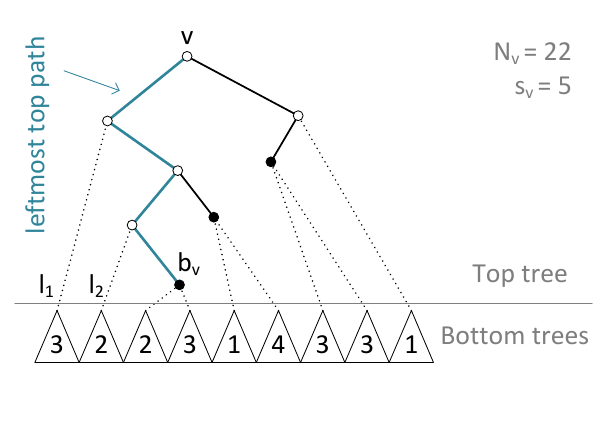}
 	\caption{Example of the ART-decomposition and a leftmost top path. In the top, the nodes forming the top tree are drawn. In the bottom, triangles representing the bottom trees with a number that is the size of the bottom tree. $v$'s leftmost top path is shown as well, and the two trees hanging to the left of this path $l_1$ and $l_2$.}\label{fig:artdecomposition}
 \end{figure}

\aparagraph{The van Emde Boas Decomposition of Grammars}
Our definition of the vEB decomposition of trees can be extended to SLPs as follows. Since the vEB decomposition is based only on the length of the string $N_v$ generated by each node $v$, the definition of the vEB decomposition is also well-defined on SLPs. As in the tree, all nodes belong to at most one top DAG. We can therefore reuse the terminology from the definition for trees on SLPs as well.

To compute the vEB decomposition first determine the level of each node and then remove all edges between nodes on different levels. This can be done in $O(n)$ time.

\subsection{Data Structure}
We first present a data structure that achieves $O((\log \log N)^2 + \log(i))$ time. In the next section we then show how to improve the running time to the desired $O(\log \log (N) + \log(i))$ bound.\\
Our data structure contains the following information for each node $v \in \slp$. Let $l_1, l_2, \ldots, l_k$ be the nodes hanging to the left of $v$'s leftmost top path (excluding nodes hanging from the bottom node).
\begin{itemize}
	\item The length $N_v$ of $S(v)$.
	\item The sum of the sizes of nodes hanging to the left of $v$'s leftmost top path $s_v = |l_1| + |l_2| + \ldots + |l_k|$.
	\item A pointer $b_v$ to the bottom node on $v$'s leftmost top path.
	\item A predecessor data structure over the sequence $1, |l_1|+1, |l_1| + |l_2|+1, \ldots, \sum_{i=1}^{k-1} |l_i|+1$. We will later show how to represent this data structure.
\end{itemize}
In addition we also build the data structure from Lemma~\ref{lem:randomaccess} that given any node $v$ supports random access to $S(v)$ in $O(\log N_v)$ time using $O(n)$ space. 

To perform an access query we proceed as follows. Suppose that we have reached some node $v$ and we want to compute $S(v)[i]$. We consider the following five cases (when multiple cases apply take the first):
\begin{enumerate}
	\item If $N_v = O(1)$. Decompress $S(v)$ and return the $i$'th character.
	\item If $i \leq s_v$. Find the predecessor $p$ of $i$ in $v$'s predecessor structure and let $u$ be the corresponding node. Recursively find $S(u)[i - p]$. 
	\item If $i \leq s_v + N_{\leftc(b_v)}$. Recursively find $S(\leftc(b_v))[i - s_v]$.
	\item If $i \leq s_v + N_{b_v}$. Recursively find $S(\rightc(b_v))[i - s_v - N_{left(b_v)}]$.
	\item In all other cases, perform a random access for $i$ in $\slp(v)$ using Lemma~\ref{lem:randomaccess}. 
\end{enumerate}

To see correctness, first note that case (1) and (5) are correct by definition. Case (2) is correct since when $i \leq s_v$ we know the $i$'th leaf must be in one of the trees hanging to the left of the leftmost top path, and the predecessor query ensures we recurse into the correct one of these bottom trees. In case (3) and (4) we check if the $i$'th leaf is either in the left or right subtree of $b_v$ and if it is, we recurse into the correct one of these.

\aparagraph{Compact Predecessor Data Structures} 

We now describe how to represent the predecessor data structure. Simply storing a predecessor structure in every single node would use $O(n^2)$ space. We can reduce the space to $O(n)$ using ideas similar to the  construction of the "heavy path suffix forest" in  \cite{BLRSSW2014}.

Let $L$ denote the \textit{leftmost top path forest}. The nodes of $L$ are the nodes of $\slp$. A node $u$ is the parent of $v$ in $L$ iff $u$ is a child of $v$ in $\slp$ and $u$ is on $v$'s leftmost top path. Thus, a leftmost top path $v_1, \ldots, v_k$ in $\slp$ is a sequence of ancestors from $v_1$ in $L$. The weight of an edge $(u, v)$ in $L$ is 0 if $u$ is a left child of $v$ in $\slp$ and otherwise $N_{\leftc(v)}$. Several leftmost top paths in $\slp$ can share the same suffix, but the leftmost top path of a node in $\slp$ is uniquely defined and thus $L$ is a forest. A leftmost path ends in a leaf in the top DAG, and therefore $L$ consists of $O(n)$ trees each rooted at a unique leaf of a top dag. A predecessor query on the sequence $1, |l_1|+1, |l_1| + |l_2|+1, \ldots, \sum_{i=1}^{k-1} |l_i|+1$ now corresponds to a weighted ancestor query in $L$. 
We plug in the weighted ancestor data structure from Farach-Colton and Muthukrishnan~\cite{Farach96perfecthashing}, which supports weighted ancestor queries in a forest in $O(\log\log n + \log\log U))$ time with $O(n)$ preprocessing and space, where $U$ is the maximum weight of a root-to-leaf path and $n$ the number of leaves. We have $U = N$ and hence the time for queries becomes $ O(\log \log N)$.

\aparagraph{Space and Preprocessing Time}
For each node in $\slp$ we store a constant number of values, which takes $O(n)$ space. Both the predecessor data structure and the data structure for supporting random access from Lemma~\ref{lem:randomaccess} take $O(n)$ space, so the overall space usage is $O(n)$. The vEB decomposition can be computed in $O(n)$ time. The leftmost top paths and the information saved in each node can be computed in linear time. The predecessor data structure uses linear preprocessing time, and thus the total preprocessing time is $O(n)$.

\aparagraph{Query Time}
Consider each case of the recursion. The time for case (1), (3) and (4) is trivially $O(1)$. Case (2) is $O(\log \log N)$ since we perform exactly one predececssor query in the predecessor data structure. 

In case (5) we make a random access query in a node of size $N_v$. From Lemma~\ref{lem:randomaccess} we have that  the query time is $O(\log N_v)$. We know $\level(v) = \level(b_v)$ since they are on the same leftmost top path. From the definition of the level it follows for any pair of nodes $u$ and $w$ with the same level that $N_u \geq \sqrt{N_w}$ and thus  $N_{b_v} \geq \sqrt{N_v}$. From the conditions we have $i > s_v + N_{b_v} \geq N_{b_v} \geq \sqrt{N_v}$. Since $\sqrt{N_v} < i \Leftrightarrow \log N_v < 2\log i$ we have $\log N_v = O(\log i)$ and thus the running time for  case (5) is $O(\log N_v) = O(\log i)$.

Case (1) and (5) terminate the algorithm and can thus not happen more than once. Case (2), (3) and (4) are repeated at most $O(\log \log N)$ times since the level of the node we recurse on increments by at least one in each recursive call, and the level of a node is at most $O(\log\log N)$. The overall running time is therefore $O((\log\log N)^2 + \log i)$.

In summary, we have the following result.

\begin{lemma}\label{lem:distaccess0}
	Let $\mathcal{S}$ be an SLP of size $n$ representing a string of length $N$. Using $O(n)$ space, we can support access to position $i$ of any node $v$, in time $O(\log i + (\log\log N)^2)$.
\end{lemma}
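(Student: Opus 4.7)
The plan is to realize the algorithm sketched in the preceding discussion by combining the van Emde Boas decomposition with the random access data structure of Lemma~\ref{lem:randomaccess}, and to pay for the predecessor information by sharing it across all nodes of the grammar. First I would construct the vEB decomposition of $\slp$ in $O(n)$ time and, at every node $v$, store $N_v$, the prefix sum $s_v$ of the sizes of the subtrees hanging to the left of $v$'s leftmost top path, and a pointer $b_v$ to the bottom node of that path. In parallel I would build the random access structure from Lemma~\ref{lem:randomaccess} as a fallback.

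Next, since storing a separate predecessor structure at every node could require $\Theta(n^2)$ space, I would bundle all the prefix-sum sequences into the leftmost top path forest $L$: $u$ is the parent of $v$ in $L$ iff $u$ is $v$'s child in $\slp$ lying on $v$'s leftmost top path, and the edge is weighted $0$ if $u = \leftc(v)$ and $N_{\leftc(v)}$ otherwise. Each per-node predecessor query then becomes a weighted ancestor query in $L$; plugging in the structure of Farach-Colton and Muthukrishnan~\cite{Farach96perfecthashing} gives $O(n)$ total space and $O(\log\log N)$ query time, since the maximum root-to-leaf weight is $U = N$ and $|L| = O(n)$.

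With the data structure in place, I would verify the five-case recursive query algorithm. Cases~(1) and~(5) terminate, and cases~(2)--(4) each strictly increase $\level$ of the current node, so the recursion bottoms out after $O(\log\log N)$ steps. Case~(2) costs $O(\log\log N)$ due to the predecessor query, while cases~(3) and~(4) take constant time, accounting for the $(\log\log N)^2$ summand. The main obstacle---the only non-routine step---is bounding case~(5), where the naive application of Lemma~\ref{lem:randomaccess} costs $O(\log N_v)$, which is far too weak. To handle this, I would exploit the fact that triggering case~(5) forces $i > s_v + N_{b_v} \geq N_{b_v}$, and since $b_v$ shares the level of $v$, the vEB invariant guarantees $N_{b_v} \geq \sqrt{N_v}$. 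Hence $i \geq \sqrt{N_v}$ and $\log N_v = O(\log i)$, so case~(5) contributes only $O(\log i)$. Adding the two contributions yields the claimed $O(\log i + (\log\log N)^2)$ bound.
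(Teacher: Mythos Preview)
Your proposal is correct and follows essentially the same approach as the paper: the vEB decomposition with per-node data $(N_v, s_v, b_v)$, the reduction of the predecessor structures to weighted ancestor queries in the leftmost top path forest $L$ via Farach-Colton--Muthukrishnan, the five-case recursion with $O(\log\log N)$ depth from the level-increase argument, and the key observation $i > N_{b_v} \ge \sqrt{N_v}$ to bound case~(5) by $O(\log i)$. The paper's argument is the same in all essential respects.
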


\subsection{Improving the Query Time for Small Indices}
The above algorithm obtains the running time $O(\log i)$ for $i \geq 2^{(\log\log N)^2}$. We will now improve the running time to $O(\log\log N + \log i)$ by improving the running time in the case when $i < 2^{(\log\log N)^2}$.

In addition to the data structure from above, we add another copy of the data structure with a few changes. When answering a query, we first check if $i \geq 2^{(\log\log N)^2}$. If $i \geq 2^{(\log\log N)^2}$  we use the original data structure, otherwise we use the new copy.

The new copy of the data structure is implemented as follows. In the first level of the ART-decomposition let $x = 2^{(\log\log N)^2}$ instead of $\sqrt N$. For the rest of the levels use $\sqrt{x}$ as before. Furthermore, we split the resulting new leftmost top path forest $L$ into two disjoint parts: $L_1$ consisting of all nodes with level 1 and $L_{\geq2}$ consisting of all nodes with level at least 2. For $L_1$ we use the weighted ancestor data structure by Farach-Colton and Muthukrishnan~\cite{Farach96perfecthashing} as in the previous section using $O(\log\log n + \log\log N) = O(\log \log N)$ time. However, if we apply this solution for $L_{\geq 2}$ we end up with a query time of $O(\log \log n + \log\log x)$, which does not lead to an improved solution. Instead, we present a new data structure that supports queries in $O(\log \log x)$ time. 

\begin{lemma}\label{lem:wa}
	Given a tree $T$ with $n$ leaves where the sum of edge weights on any root-to-leaf path is at most $x$ and the height is at most $x$, we can support weighted ancestor queries in $O(\log\log x)$ time using $O(n)$ space and preprocessing time. 
\end{lemma}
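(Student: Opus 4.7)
The plan is to combine a heavy-path decomposition of $T$ with per-path y-fast tries, together with an auxiliary structure on the tree of heavy paths that handles path-navigation in $O(\log\log x)$ time. First, I would apply a heavy-path decomposition to $T$. On each heavy path $P$, the weighted depths of the nodes form a strictly increasing sequence over the universe $[0,x]$, so storing a y-fast trie keyed by weighted depth yields $O(\log\log x)$ predecessor queries in $O(|P|)$ space. Summed over all heavy paths, the total space is $O(n)$.

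Next, I would observe that a weighted-ancestor query $(v,d)$ reduces to two subtasks: (a) locate the unique ancestor heavy path $P^{\ast}$ whose top has weighted depth at most $d$ while the ancestor of $v$ on $P^{\ast}$ has weighted depth at least $d$; and (b) perform one y-fast-trie predecessor query on $P^{\ast}$. Subtask (b) costs $O(\log\log x)$ by construction, so the whole query budget collapses to solving subtask (a) within $O(\log\log x)$ time.

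For subtask (a), I would build the heavy-path tree $T'$ whose nodes are the heavy paths of $T$, labelled by the weighted depths of their tops. The labels strictly decrease along root-to-leaf paths of $T'$ and are drawn from $[0,x]$, so locating $P^{\ast}$ becomes an ``ancestor with largest label at most $d$'' query on $T'$. I would apply an ART-decomposition to $T'$ with parameter $\Theta(\log x)$: each bottom tree has at most $\log x$ nodes, and the remaining top tree has $O(n/\log x)$ leaves. Within each bottom tree I would pack the labels and shape into a single $O(\log^{2}x)$-bit word and tabulate the answers to all ancestor-label queries, giving $O(1)$ time per bottom-tree step. On the top tree I would plug in Farach-Colton and Muthukrishnan's weighted-ancestor structure with universe $[0,x]$, running in $O(\log\log(n/\log x)+\log\log x)$ time.

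The main obstacle is precisely that last $\log\log(n/\log x)$ term: since only height is bounded by $x$ and not $n$, it could in principle exceed $\log\log x$. My plan to remove it is to iterate the ART/tabulation reduction a constant number of times, each iteration shrinking the macro component by a polylogarithmic factor, so that the residual macro is of size $\mathrm{poly}(\log x)$ and the final Farach-Colton--Muthukrishnan invocation contributes $O(\log\log x)$. Cross-level routing is maintained by precomputed pointers at the boundary of each ART split, so the per-query overhead of descending through the iterated decomposition is $O(1)$ per level, preserving the overall $O(\log\log x)$ bound while keeping space and preprocessing $O(n)$.
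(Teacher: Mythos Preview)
Your plan has a genuine gap in the iterated ART step. You write that iterating the ART/tabulation reduction ``a constant number of times'' shrinks the macro component to size $\mathrm{poly}(\log x)$, but each iteration only divides the number of leaves by $\Theta(\log x)$. Starting from $O(n)$ leaves in the heavy-path tree $T'$, you need roughly $\log n/\log\log x$ iterations to reach $\mathrm{poly}(\log x)$; since $n$ and $x$ are independent parameters in the lemma, this is not a constant. Paying $O(1)$ per level at query time then costs $O(\log n/\log\log x)$, which is not $O(\log\log x)$. A secondary concern is the tabulation: a table indexed by an $O(\log^{2}x)$-bit encoding has $2^{\Theta(\log^{2}x)}$ entries, and nothing in the hypotheses bounds this by $O(n)$.

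The paper avoids both difficulties with a single-level ART decomposition applied directly to $T$ (no heavy paths, no tabulation), but with the much larger parameter $x$ rather than $\log x$. The top tree then has only $O(n/x)$ leaves and height at most $x$, so one can afford to store, for \emph{each} top-tree leaf, an explicit predecessor structure over the accumulated weights along its root-to-leaf path; the total space is $O((n/x)\cdot x)=O(n)$, and a top-tree query is a single predecessor lookup over a universe of size $x$, i.e.\ $O(\log\log x)$ time. Each bottom tree has at most $x$ leaves and height at most $x$, hence at most $x^{2}$ nodes and weight universe $x$, so the Farach-Colton--Muthukrishnan structure already gives $O(\log\log x^{2}+\log\log x)=O(\log\log x)$ there. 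The key observation you are missing is that taking the ART parameter as large as $x$ makes the top tree small enough to store per-leaf path structures outright, eliminating any dependence on $n$ in the query time.
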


\begin{proof}
	Create an ART-decomposition 
	of $T$ with parameter $x$. 
	For each bottom tree in the decomposition construct the weighted ancestor structure from \cite{Farach96perfecthashing}. For the top tree, construct a predecessor structure over the accumulated edge weights for each root-to-leaf path. 
	
	To perform a weighted ancestor query on a node in a bottom tree, we first perform a weighted ancestor query using the data structure for the bottom tree. In case we end up in the root of the bottom tree, we continue with a predecessor search in the top tree from the leaf corresponding to the bottom tree. 
	
	The total space for bottom trees is $O(n)$. Since the top tree has $O(n/x)$ leaves and height at most $x$, the total space for all predecessor data structures on root-to-leaf paths in the top tree is $O(n/x \cdot x)= O(n)$. Hence, the total space is $O(n)$. 
	
	A predecessor query in the top tree takes $O(\log\log x)$ time. The number of nodes in each bottom tree is at most $x^2$ since it has at most $x$ leaves and height $x$ and the maximum weight of a root-to-leaf path is $x$ giving weighted ancestor queries in $O(\log\log x^2 + \log\log x) = O(\log\log x)$ time. Hence, the total query time is $O(\log\log x)$.\qed
\end{proof}

We reduce the query time for queries with $i < 2^{(\log\log N)^2}$ using the new data structure. 
The level of any node in the new structure is at most $O(1 + \log \log {2^{(\log\log N)^2}}) = O(\log \log \log N)$. A weighted ancestor query in $L_1$ takes time $O(\log\log N)$. For weighted ancestor queries in $L_{\geq 2}$, we know any node $v$ has height at most $2^{(\log\log N)^2}$ and on any root-to-leaf path the sum of the weights is at most $2^{(\log\log N)^2}$. Hence, by  Lemma~\ref{lem:wa} we support queries in  $O(\log \log {2^{(\log\log N)^2}}) = O(\log \log \log N)$ time for nodes in $L_{\geq 2}$.

We make at most one weighted ancestor query in $L_1$, the remaining ones are made in $L_{\geq 2}$, and thus the overall running time is $O(\log\log N + (\log\log\log N)^2 + \log i) = O(\log\log N + \log i)$.

In summary, this completes the proof of Lemma~\ref{lem:distaccess}.


\section{Static Finger Search}\label{sec:sfs}
We now show how to apply our solution to the fringe access to a obtain a simple data structure for the static finger search problem. This solution will be the starting point for solving the dynamic case in the next section, and we will use it as a key component in our result for longest common extension problem.

Similar to the fringe search problem we assume without loss of generality that the access point $i$ is to the right of the finger. 

 \begin{figure}
 	\centering
 	\includegraphics[width=0.44\textwidth]{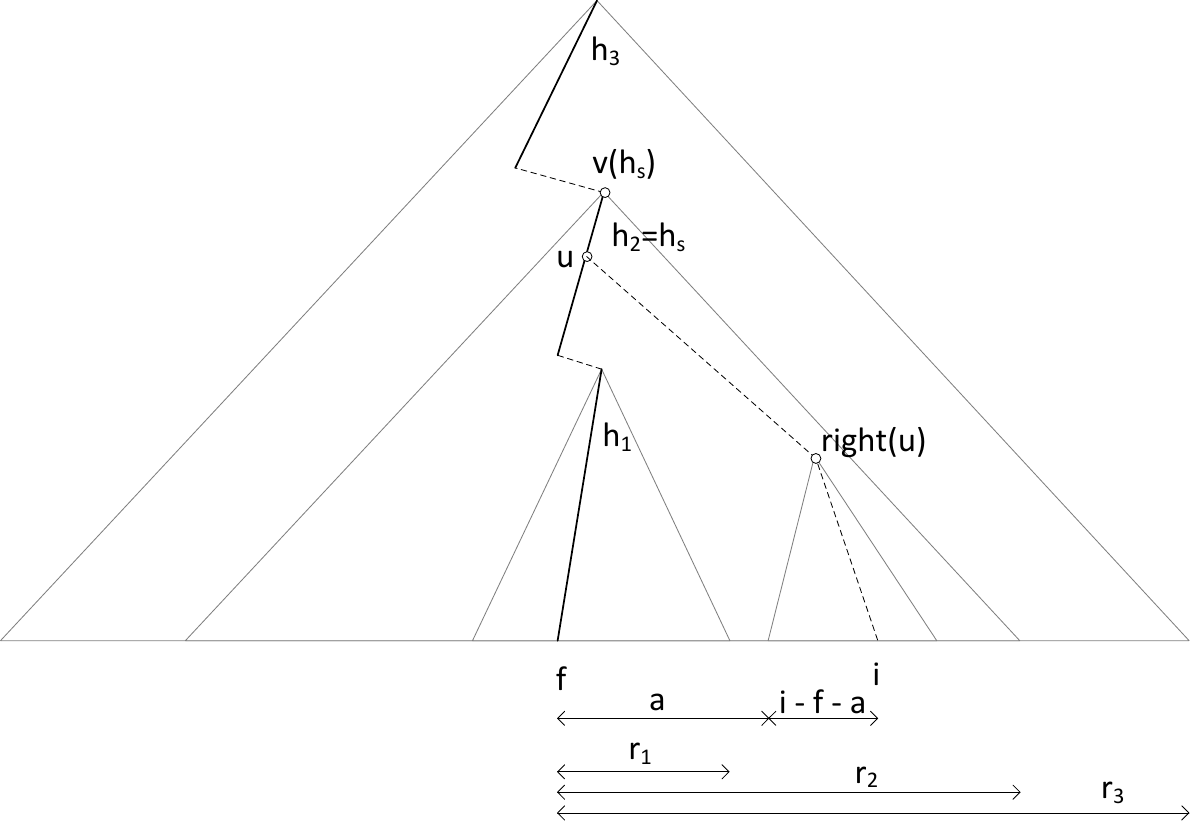}
 	\caption{Illustration of the data structure for a finger pointing at $f$ and an access query at location $i$. $h_1, h_2, h_3$ are the heavy paths visited when finding the finger. $u$ corresponds to $NCA(v_f, v_i)$ in the parse tree and $h_s$ is the heavy path on which $u$ lies, which we use to find $u$. $a$ is a value calculated during the access query. }\label{fig:finger}
 \end{figure}
 
\aparagraph{Data Structure}
We store the random access data structure from \cite{BLRSSW2014} used in Lemma~\ref{lem:randomaccess} and the fringe search data structures from above. Also from \cite{BLRSSW2014} we store the data structure that for any heavy path $h$ starting in a node $v$ and an index $i$ of a leaf in $T(v)$ gives the exit-node from $h$ when searching for $i$ in $O(\log \log N)$ time and uses $O(n)$ space.

To represent a finger the key idea is store a compact data structure for the corresponding root-to-leaf path in the grammar that allows us to navigate it efficiently. Specifically, let $f$ be the position of the current finger and let $p = v_1 \dots v_k$ denote the path in $\slp$ from the root to $v_f$ ($v_1 = root$ and $v_k = v_f$). Decompose $p$ into the $O(\log N)$ heavy paths it intersects, and call these $h_j = v_1 \dots v_{i_1}, h_{j-1} = v_{i_1+1} \dots v_{i_2}, \cdots, h_1 = v_{i_{j-1}+1} \dots v_k$. Let $v(h_i)$ be the topmost node on $h_i$ ($v(h_j) = v_1, v(h_{j-1}) = v_{i_1}, \dots$). Let $l_j$ be the index of $f$ in $\slp(v(h_j))$ and $r_j = N_{v(h_j)} - l_j$. For the finger we store:

\begin{enumerate}
	\item The sequence $r_1, r_2, \dots, r_j$ (note $r_1 \leq r_2 \leq \dots \leq r_j$).
	\item The sequence $v(h_1), v(h_2), \dots, v(h_j)$.
	\item The string $F_T = S[f+1, f+\log N]$.
\end{enumerate}

\aparagraph{Analysis}
The random access and fringe search data structures both require $O(n)$ space. Each of the 3 bullets above require $O(\log N)$ space and thus the finger takes up $O(\log N)$ space. The total space usage is $O(n)$.

\aparagraph{Setfinger} 
We implement $\setfinger(f)$ as follows. First, we apply Lemma~\ref{lem:randomaccess} to make random access to position $f$. This gives us the sequence of visited heavy paths which exactly corresponds to $h_j$, $h_{j-1}, \dots, h_1$ including the corresponding $l_i$ values from which we can calculate the $r_i$ values. So we update the $r_i$ sequence accordingly.
Finally, decompress and save the string $F_T = S[f+1, f+\log N]$.

The random access to position $f$ takes $O(\log N)$ time. In addition to this we perform a constant number of operations for each heavy path $h_i$, which in total takes $O(\log N)$ time. Decompressing a string of $\log N$ characters can be done in $O(\log N)$ time (using \cite{BLRSSW2014}). In total, we use $O(\log N)$ time.

\aparagraph{Access}
To perform $\access(i)$ ($i > f$), there are two cases. If $D = i - f \leq \log N$ we simply return the stored character $F_T[D]$ in constant time. Otherwise, we compute the node $u = \nca(v_f, v_i)$ in the parse tree $T$ as follows. First find the index $s$ of the successor to $D$ in the $r_i$ sequence using binary search. Now we know that $u$ is on the heavy path $h_s$. Find the exit-nodes from $h_s$ when searching for respectively $i$ and $f$ using the data structure from \cite{BLRSSW2014} - the topmost of these two is $u$. See Fig.~\ref{fig:finger}. Finally, we compute $a$ as the index of $f$ in $T(\leftc(u))$ from the right and use the data structure for fringe search from Lemma~\ref{lem:distaccess} to compute $S(\rightc(u))[i - f - a]$. 

For $D \leq \log N$, the operation takes constant time. For $D > \log N$, the binary search over a sequence of $O(\log N)$ elements takes $O(\log \log N)$ time, finding the exit-nodes takes $O(\log\log N)$ time, and the fringe search takes $O(\log(i - f - a)) = O(\log D)$ time. Hence, in total $O(\log\log N + \log D) = O(\log D)$ time.

\medskip

This completes the proof of Theorem~\ref{thm:finger}(i). 

\section{Dynamic Finger Search}
In this section we show how to extend the solution from Section \ref{sec:sfs} to handle dynamic finger search. The target is to support the $\movefinger$ operation that will move the current finger, where the time it takes is dependent on how far the finger is moved. Obviously, it should be faster than simply using the $\setfinger$ operation. The key difference from the static finger is a new decomposition of a root-to-leaf path into paths. The new decomposition is based on a combination of heavy paths and leftmost top paths, which we will show first. Then we show how to change the data structure to use this decomposition, and how to modify the operations accordingly. Finally, we consider how to generalize the solution to work when $\movefinger$/$\access$ might both be to the left and right of the current finger, which for this solution is not trivially just by symmetry.

Before we start, let us see why the data structure for the static finger cannot directly be used for dynamic finger. Suppose we have a finger pointing at $f$ described by $\Theta(\log N)$ heavy paths. It might be the case that after a $\movefinger(f + 1)$ operation, it is $\Theta(\log N)$ completely different heavy paths that describes the finger. In this case we must do $\Theta(\log N)$ work to keep our finger data structure updated. This can for instance happen when the current finger is pointing at the right-most leaf in the left subtree of the root.

Furthermore, in the solution to the static problem, we store the substring $S[f+1, f+\log N]$ decompressed in our data structure. If we perform a $\movefinger(f + \log N)$ operation nothing of this substring can be reused. To decompress $\log N$ characters takes $\Omega(\log N)$ time, thus we cannot do this in the $\movefinger$ operation and still get something faster than $\Theta(\log N)$.

\subsection{Left Heavy Path Decomposition of a Path}
Let $p = v_1 \dots v_k$ be a root-to-leaf path in $\slp$. A subpath $p_i = v_a \dots v_b$ of $p$ is a \emph{maximal heavy subpath} if $v_a \dots v_b$ is part of a heavy path and $v_{b+1}$ is not on the same heavy path. Similarly, a subpath $p_i = v_a \dots v_b$ of $p$ is a \emph{maximal leftmost top subpath} if $v_a \dots v_b$ is part of a leftmost top path and $level(v_b) \neq level(v_{b+1})$.

A \emph{left heavy path decomposition} is a decomposition of a root-to-leaf path $p$ into an arbitrary sequence $p_1 \dots p_j$ of maximal heavy subpaths, maximal leftmost top subpaths and (non-maximal) leftmost top subpaths immediately followed by maximal heavy subpaths. 

Define $v(p_i)$ as the topmost node on the subpath $p_i$. Let $l_j$ be the index of the finger $f$ in $\slp(v(p_j))$ and $r_j = N_{v(p_j)} - l_j$. Let $t(p_i)$ be the type of $p_i$; either heavy subpath ($HP$) or leftmost top subpath ($LTP$).

A left heavy path decomposition of a root-to-leaf path $p$ is not unique. The heavy path decomposition of $p$ is always a valid left heavy path decomposition as well. The visited heavy paths and leftmost top paths during fringe search are always maximal and thus is always a valid left heavy path decomposition.

\begin{lemma}\label{lem:leftheavypaths}
	The number of paths in a left heavy path decomposition is $O(\log N)$.
\end{lemma}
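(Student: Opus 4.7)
The plan is to classify the subpaths in any left heavy path decomposition by type and bound each class separately. Denote by $h$, $l$, and $c$ the numbers of maximal heavy subpaths, maximal leftmost top subpaths, and non-maximal leftmost top subpaths (each immediately followed by a maximal heavy subpath), respectively; the quantity to bound is $j = h + l + c$.

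First I would bound $h$. Any root-to-leaf path $p$ in $\slp$ descends monotonically in the DAG, so each heavy path of $\slp$ intersects $p$ in a single contiguous run, and the number of such runs equals the number of heavy paths visited by $p$, which Lemma~\ref{lem:randomaccess} bounds by $O(\log N)$. A maximal heavy subpath $v_a\ldots v_b$ is forced to terminate exactly where $p$ leaves its heavy path (since $v_{b+1}$ must lie on a different heavy path), so at most one maximal heavy subpath falls within each heavy-path run, giving $h = O(\log N)$.

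Next I would bound $l$. Because $\level(v) = \lfloor \log\log N - \log\log N_v\rfloor$ and $N_v$ is nonincreasing as we descend $p$, the level is a nondecreasing function along $p$ taking at most $O(\log\log N)$ distinct values, so $p$ splits into $O(\log\log N)$ maximal constant-level blocks. A maximal leftmost top subpath must, by definition, end at a node $v_b$ with $\level(v_b)\neq\level(v_{b+1})$, i.e., at the bottom of a constant-level block. Hence within each such block there can be at most one maximal leftmost top subpath (ending at the block's last node), yielding $l = O(\log\log N)$.

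Finally, each non-maximal leftmost top subpath is, by the definition of the decomposition, immediately followed in the sequence $p_1,\ldots,p_j$ by a maximal heavy subpath, and distinct type-C subpaths map injectively into distinct type-H subpaths, so $c \le h = O(\log N)$. Adding up gives $j = h + l + c = O(\log N)$. The main obstacle I expect is step (ii): I must be careful to rule out several ``disjoint'' maximal leftmost top subpaths coexisting within a single constant-level stretch, which uses both the ``ends at a level change'' part of the definition and the monotonicity of $\level$ along $p$; once these two ingredients are in place the counting is straightforward.
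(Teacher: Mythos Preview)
Your proposal is correct and follows essentially the same approach as the paper: you bound the three types of subpaths separately, using the $O(\log N)$ heavy-path bound from Lemma~\ref{lem:randomaccess} for maximal heavy subpaths, the $O(\log\log N)$ bound on level changes for maximal leftmost top subpaths, and the injection into maximal heavy subpaths for the non-maximal leftmost top subpaths. Your presentation is slightly more explicit (e.g., spelling out the contiguity of heavy-path intersections and the constant-level blocks), but the argument is the same.
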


\begin{proof}
There are at most $O(\log N)$ heavy paths that intersects with a root-to-leaf path (Lemma~\ref{lem:randomaccess}). Each of these can at most be used once because of the maximality. So there can at most be $O(\log N)$ maximal heavy paths. Each time there is a maximal leftmost top path, the level of the following node on $p$ increases. This can happen at most $O(\log\log N)$ times. Each non-maximal leftmost top path is followed by a maximal heavy path, and since there are only $O(\log N)$ of these, this can happen at most $O(\log N)$ times. Therefore the sequence of paths has length $O(\log N + \log \log N + \log N) = O(\log N)$.
\end{proof}

\subsection{Data Structure}

We use the data structures from \cite{BLRSSW2014} as in the static variant and the fringe access data structure with an extension. In the fringe access data structure there is a predecessor data structure for all the nodes hanging to the left of a leftmost top path. To support $\access$ and $\movefinger$ we need to find a node hanging to the left or right of a leftmost top path. We can do this by storing an identical predecessor structure for the accumulated sizes of the nodes hanging to the right of each leftmost top path. Again, the space usage for this predecessor structure can be reduced to $O(n)$ by turning it into a weighted ancestor problem.

To represent a finger the idea is again to have a compact data structure representing the root-to-leaf path corresponding to the finger. This time we will base it on a left heavy path decomposition instead of a heavy path decomposition. Let $f$ be the current position of the finger. For the root-to-leaf path to $v_f$ we maintain a left heavy path decomposition, and store the following for a finger:

\begin{enumerate}
	\item The sequence $r_1, r_2, \dots, r_j$ ($r_1 \leq r_2 \leq \dots \leq r_j$) on a stack with the last element on top.
	\item The sequence $v(p_1), v(p_2), \dots, v(p_j)$ on a stack with the last element on top.
	\item The sequence $t(p_1), t(p_2), \dots, t(p_j)$ on a stack with the last element on top.
\end{enumerate}

\aparagraph{Analysis} The fringe access data structure takes up $O(n)$ space. For each path in the left heavy path decomposition we use constant space. Using Lemma \ref{lem:leftheavypaths} we have the space usage of this is $O(\log N) = O(n)$.

\aparagraph{Setfinger} 
Use fringe access (Lemma~\ref{lem:distaccess}) to access position $f$. This gives us a sequence of leftmost top paths and heavy paths visited during the fringe access which is a valid left heavy path decomposition. Calculate $r_i$ for each of these and store the three sequences of $r_i$, $v(p_i)$ and $t(p_i)$ on stacks.

The fringe access takes $O(\log f + \log\log N)$ time. The number of subpaths visited during the fringe access cannot be more than $O(\log f + \log\log N)$ and we only perform constant extra work for each of these.

\aparagraph{Access} To implement $\access(i)$ for $i > f$ we have to find $u = \nca(v_i, v_f)$ in the $T$. Find the index $s$ of the successor to $D = i - f$ in $r_1, r_2, \ldots, r_j$ using binary search. We know $\nca(v_i, v_f)$ lies on $p_s$, and $v_i$ is in a subtree that hangs of $p_s$. The exit-nodes from $p_s$ to $v_f$ and $v_i$ are now found - the topmost of these two is $\nca(v_i, v_f)$. If $t(p_s) = HP$ then we can use the same data structure as in the static case, otherwise we perform the predecessor query on the extra predecessor data structure for the nodes hanging of the leftmost top path. Finally, we compute $a$ as the index of $f$ in $S(\leftc(u))$ from the right and use the data structure for fringe access from Lemma~\ref{lem:distaccess} to compute $S(\rightc(u))[i - f - a]$. 

The binary search on $r_1, r_2, \dots, r_j$ takes $O(\log \log N)$ time. Finding the exit-nodes from $p_s$ takes $O(\log \log N)$ in either case. Finally the fringe access takes $O(\log(i - f - a) + \log \log N) = O(\log D + \log \log N)$. Overall it takes $O(\log D + \log \log N)$.

Note the extra $O(\log\log N)$ time usage because we have not decompressed the first $\log N$ characters following the finger.

\aparagraph{Movefinger}
To move the finger we combine the $\access$ and $\setfinger$ operations. Find the index $s$ of the successor to $D = i - f$ in $r_1, r_2, \ldots, r_j$ using binary search. Now we know $u = \nca(v_i, v_f)$ must lie on $p_s$. Find $u$ in the same way as when performing access. From all of the stacks pop all elements above index $s$. Compute $a$ as the index of $f$ in $S(\leftc(u))$ from the right. The finger should be moved to index $i - f - a$ in $\rightc(u)$. First look at the heavy path $\rightc(u)$ lies on and find the proper exit-node $w$ using the data structure from \cite{BLRSSW2014}. Then continue with fringe searh from the proper child of $w$. This gives a heavy path followed by a sequence of maximal leftmost top paths and heavy paths needed to reach $v_i$ from $\rightc(u)$, push the $r_j$, $v(p_j)$, and $t(p_j)$ values for these on top of the respective stacks.

We now verify the sequence of paths we maintain is still a valid left heavy path decomposition. Since fringe search gives a sequence of paths that is a valid left heavy path decomposition, the only problem might be $p_s$ is no longer maximal. If $p_s$ is a heavy path it will still be maximal, but if $p_s$ is a leftmost top path then $level(u)$ and $level(\rightc(u))$ might be equal. But this possibly non-maximal leftmost top path is always followed by a heavy path. Thus the overall sequence of paths remains a left heavy path decomposition.

The successor query in $r_1, r_2, \dots, r_j$ takes $O(\log\log N)$ time. Finding $u$ on $p_i$ takes $O(\log\log N)$ time, and so does finding the exit-node on the following heavy path. Popping a number of elements from the top of the stacks can be done in $O(1)$ time. Finally the fringe access takes $O(\log(i - f - a) + \log \log N) = O(\log D + \log \log n)$ including pushing the right elements on the stacks. Overall the running time is therefore $O(\log D + \log \log n)$.

\subsection{Moving/Access to the Left of the Finger} In the above we have assumed $i > f$, we will now show how this assumption can be removed. It is easy to see we can mirror all data structures and we will have a solution that works for $i < f$ instead. Unfortunately, we cannot just use a copy of each independently, since one of them only supports moving the finger to the left and the other only supports moving to the right. We would like to support moving the finger left and right arbitrarily. This was not a problem with the static finger since we could just make $\setfinger$ in both the mirrored and non-mirrored data structures in $O(\log N)$ time.

Instead we extend our finger data structure. First we extend the left heavy path decomposition to a \emph{left right heavy path decomposition} by adding another type of paths to it, namely \textit{rightmost top paths} (the mirrorred version of leftmost top paths). Thus a \emph{left right heavy path decomposition} is a decomposition of a root-to-leaf path $p$ into an arbitrary sequence $p_1 \dots p_j$ of maximal heavy subpaths, maximal leftmost/rightmost top subpaths and (non-maximal) leftmost/rightmost top subpaths immediately followed by maximal heavy subpaths.  Now $t(p_i) = HP | LTP | RTP$. Furthermore, we save the sequence $l_1, l_2, \dots, l_j$ ($l_j$ being the left index of $f$ in $T(v(p_i))$) on a stack like the $r_1, r_2, \dots, r_j$ values, etc.

When we do $\access$ and $\movefinger$ where $i < f$, the subpath $p_s$ where $nca(v_f, v_i)$ lies can be found by binary search on the $l_j$ values instead of the $r_j$ values. Note the $l_j$ values are sorted on the stack, just like the $r_j$ values. The following heavy path lookup/fringe access should now be performed on $\leftc(u)$ instead of $\rightc(u)$. The remaining operations can just be performed in the same way as before.


\section{Finger Search with Fingerprints and Longest Common Extensions}\label{sec:flce}

We show how to extend our finger search data structure from Theorem~\ref{thm:finger}(i) to support computing fingerprints and then apply the result to compute longest common extensions. First, we will show how to return a fingerprint for $S(v)[1, i]$ when performing access on the fringe of $v$.

\subsection{Fast Fingerprints on the Fringe}
To do this, we need to store some additional data for each node $v \in \slp$. We store the fingerprint $\phi(S(v))$ and the concatenation of the fingerprints of the nodes hanging to the left of the leftmost top path $p_v = \phi(S(l_1)) \oplus \phi(S(l_2)) \oplus \ldots \oplus \phi(S(l_k))$. We also need the following lemma:

\begin{lemma}[\cite{bille2013fingerprints}]\label{lem:randomfingerprint}
	Let $\str$ be a string of length $N$ compressed into a SLP $\slp$ of size $n$. Given a node $v \in \slp$, we can find the fingerprint $\phi(S(v)[1,i])$ where $1 \leq i \leq N_v$ in $O(\log N_v)$ time.
\end{lemma}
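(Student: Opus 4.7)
The plan is to augment the random access data structure of Lemma~\ref{lem:randomaccess} to accumulate a Karp--Rabin fingerprint along the same heavy-path walk. The key observation is that $S(v)[1,i]$ decomposes into a sequence of contiguous blocks, one per visited heavy path: on each heavy path $h$ with entry node $u_a$ and exit node $u_b$ (reported by Lemma~\ref{lem:randomaccess}), the block contributed by $h$ is the concatenation of the strings generated by the left-hanging subtrees of the nodes of $h$ strictly between $u_a$ and $u_b$. After exiting $h$, either we descend into $\leftc(u_b)$ and the remaining prefix continues recursively, or we descend into $\rightc(u_b)$ after first appending $\phi(S(\leftc(u_b)))$ to the accumulated fingerprint.

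For preprocessing, the plan is to store $\phi(S(u))$ at every node $u \in \slp$ (that is $O(n)$ space), and for each heavy path $h = u_1 \cdots u_k$ of the SLP DAG, additionally store the prefix-fingerprint sequence $\Phi_j = \phi(S(l_1) \cdot S(l_2) \cdots S(l_j))$ together with the cumulative length $|l_1| + \cdots + |l_j|$, where $l_j$ is the left-hanging subtree root at $u_j$ (empty when the heavy child is itself the left child). Because the heavy paths of the DAG partition its node set, $\sum_h |h| = O(n)$ and the total space remains $O(n)$; a single bottom-up pass combined with property~(3) of Lemma~\ref{lem:fp} fills in every $\Phi$ table in $O(n)$ time.

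At query time, I would run the random access procedure of Lemma~\ref{lem:randomaccess} to obtain the sequence of $O(\log N_v)$ heavy paths with their entry and exit nodes in $O(\log N_v)$ total time. For each visited heavy path $h$, the fingerprint of the left-hanging block between entry and exit is obtained in $O(1)$ from two $\Phi$ lookups via Lemma~\ref{lem:fp}(2), which recovers the fingerprint of a substring from the fingerprints of two prefixes of the same underlying string (here, the concatenation of left-hanging strings along $h$). Property~(3) of Lemma~\ref{lem:fp} then appends this block, and in the right-descent case also $\phi(S(\leftc(u_b)))$, to the running fingerprint in $O(1)$ more work. Summing $O(1)$ work across $O(\log N_v)$ heavy paths yields the claimed $O(\log N_v)$ bound.

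The main obstacle is keeping the per-heavy-path prefix-fingerprint tables within $O(n)$ total space, rather than the naive $\Theta(n \log N)$; this succeeds precisely because the DAG heavy paths of~\cite{BLRSSW2014} partition the node set, so $\sum_h |h| = O(n)$. A secondary subtlety is that the very first heavy path is entered at $v$ rather than at its top, so its contribution spans an interior interval of the $\Phi$ sequence instead of a prefix — but since $\Phi_j$ is stored at every node along the heavy path, the substring-fingerprint trick of Lemma~\ref{lem:fp}(2) still delivers the block fingerprint in $O(1)$, and no additional machinery is needed.
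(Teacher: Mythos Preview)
The paper does not prove this lemma; it simply cites it from~\cite{bille2013fingerprints}. Your proposal reconstructs essentially the same argument as in that reference: augment the heavy-path random access of~\cite{BLRSSW2014} with per-node fingerprints $\phi(S(u))$ and with cumulative fingerprints of the left-hanging subtrees along each heavy path, then combine blocks in $O(1)$ per heavy path via Lemma~\ref{lem:fp}.

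One wording issue worth tightening: in the SLP DAG, distinct maximal heavy paths can \emph{share suffixes} (a node may be the heavy child of several parents), so ``the heavy paths of the DAG partition its node set'' is not literally true if you think of heavy paths as root-to-leaf paths. What is true is that each node has a unique heavy child, so the heavy edges form a forest in which every node occurs exactly once; storing one $\Phi$ value \emph{per node} (rather than per heavy path) therefore costs $O(n)$, and any entry/exit pair on a heavy path lies on a single root-to-leaf path in this forest, so the two-lookup trick via Lemma~\ref{lem:fp}(2) still works. This is exactly the ``heavy path suffix forest'' device used in~\cite{BLRSSW2014} and~\cite{bille2013fingerprints}, and your final paragraph already gestures at it; just be careful that your $\Phi_j$ should be indexed from the \emph{bottom} of the heavy path upward (i.e., stored as one value per node relative to the shared leaf of its heavy-forest tree), not from a path-specific top, or the sharing breaks.
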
 

Suppose we are in a node $v$ and we want to calculate the fingerprint $\phi(S(v)[1, i])$. We perform an access query as before, but also maintain a fingerprint $p$, initially $p = \phi(\epsilon)$, computed thus far.  We follow the same five cases as before, but add the following to update $p$:

\begin{enumerate}
	\item From the decompressed $\str(v)$, calculate the fingerprint for $\str(v)[1, i]$, now update $p = p \oplus \phi(\str(v)[1, i])$.
	\item $p = p \oplus (\phi(p_v) \ominus_s \phi(p_u))$.
	\item $p = p \oplus \phi(p_v)$.
	\item $p = p \oplus \phi(p_v) \oplus \phi(S(\leftc(b_v)))$.
	\item Use Lemma~\ref{lem:randomfingerprint} to find the fingerprint for $S(v)[1, i]$ and then update with $p = p \oplus \phi(S(v)[1, i])$.
\end{enumerate}

These extra operations do not change the running time of the algorithm, so we can now find the fingerprint $\phi(S(v)[1, i])$ in time $O(\log\log N + \log (\min (i, N_v - i)))$.

\subsection{Finger Search with Fingerprints}
Next we show how to do finger search while computing fingerprints between the finger $f$ and the access point~$i$. 

When we perform $\setfinger(f)$ we use the algorithm from \cite{bille2013fingerprints} to compute fingerprints during the search of $\mathcal{S}$ from the root to $f$. This allows us to subsequently compute for any heavy path $h_j$ on the root to position $f$ the fingerprint $p(h_j)$ of the concatenation of the strings generated by the subtrees hanging to the left of $h_j$. In addition, we explicitly compute and store the fingerprints $\phi(S[1, f+1]), \phi(S[1, f+2]), \ldots, \phi(S[1, f+\log N + 1])$. In total, this takes $O(\log N)$ time. 

Suppose that we have now performed a $\setfinger(f)$ operation. To implement $\access(i)$, $i > f$, there are two cases. If $D = i - f \leq \log N$ we return the appropriate precomputed fingerprint. Otherwise, we compute the node $u = \nca(v_f, v_i)$ in the parse tree $T$ as before. Let $h$ be the heavy path containing $u$. Using the data structure from \cite{bille2013fingerprints} we compute the fingerprint $p_l$ of the nodes hanging to the left of $h$  above $u$ in constant time. The fingerprint is now obtained as $\phi(S[1, i]) = p_{h_j} \oplus p_l \oplus \phi(S(\rightc(u))[1, (i - f) - a])$, where the latter is found using fringe access with fingerprints in $\rightc(u)$. None of these additions change the asymptotic complexities of Theorem~\ref{thm:finger}(i). Note that with the fingerprint construction in~\cite{bille2013fingerprints} we can guarantee that all fingerprints are collision-free. 

\subsection{Longest Common Extensions}
Using the fingerprints it is now straightforward to implement $\lce$ queries as in~\cite{bille2013fingerprints}. Given a $\lce(i, j)$ query, first set fingers at positions $i$ and $j$. This allows us to get fingerprints of the form $\phi(S[i, i+a])$ or $\phi(S[j, j+a])$ efficiently. Then, we find the largest value $\ell$ such that $\phi(S[i, i+\ell]) = \phi(S[j, j+\ell])$ using a standard exponential search. Setting the two finger uses $O(\log N)$ time and by Theorem~\ref{thm:finger}(i) the at most $O(\log \ell)$ searches in the exponential search take at most $O(\log \ell)$ time. Hence, in total we use $O(\log N + \log^2 \ell)$ time, as desired. This completes the proof of Theorem~\ref{thm:LCEres}. 

\bibliographystyle{abbrv}
\bibliography{references}


\end{document}